\newtheorem{Theorem}{Theorem}
\newtheorem{Lemma}[Theorem]{Lemma}
\newtheorem{Proposition}[Theorem]{Proposition}
 { \theoremstyle{definition}
\newtheorem{Definition}[Theorem]{Definition}
 }
\newcommand{\R}{\ensuremath{\mathbb{R}}}
\newcommand{\C}{\ensuremath{\mathbb{C}}}
\newcommand{\dd}{\mathrm{d}}
\newcommand{\rr}{\mathbb{R}}
\newcommand{\zz}{\mathbb{Z}}
\newcommand{\cc}{{\mathbb{C}}}
\newcommand{\e}{\varepsilon}
\newcommand{\I}{{\rm i}}
\newcommand{\ep}{\varepsilon}
\newcommand{\Real}{\ensuremath{\mathrm{Re}}}
\newcommand{\Imag}{\ensuremath{\mathrm{Im}}}
\begin{document}


\newcommand{\arXivNumber}{1509.00701}

\renewcommand{\PaperNumber}{098}

\FirstPageHeading

\ShortArticleName{A Classical Limit of Noumi's $q$-Integral Operator}
\ArticleName{A Classical Limit of Noumi's $\boldsymbol{q}$-Integral Operator}

\Author{Alexei {BORODIN}~$^{\dag^1}$, Ivan {CORWIN}~$^{\dag^2\dag^3\dag^4}$ and Daniel {REMENIK}~$^{\dag^5}$}

\AuthorNameForHeading{A.~Borodin, I.~Corwin and D.~Remenik}

\Address{$^{\dag^1}$~Massachusetts Institute of Technology, Department of Mathematics,\\
\hphantom{${\dag^1}$}~77 Massachusetts Avenue, Cambridge, MA 02139-4307, USA}
\EmailDD{\href{mailto:borodin@math.mit.edu}{borodin@math.mit.edu}}

\Address{$^{\dag^2}$~Columbia University, Department of Mathematics,\\
\hphantom{$^{\dag^2}$}~2990 Broadway, New York, NY 10027, USA}
\EmailDD{\href{mailto:ivan.corwin@gmail.com}{ivan.corwin@gmail.com}}

\Address{$^{\dag^3}$~Clay Mathematics Institute, 10 Memorial Blvd. Suite 902, Providence, RI 02903, USA}
\Address{$^{\dag^4}$~Institut Henri Poincar\'e, 11 Rue Pierre et Marie Curie, 75005 Paris, France}

\Address{$^{\dag^5}$~Departamento de Ingenier\'{i}a Matem\'{a}tica and Centro de Modelamiento Matem\'{a}\-tico,\\
\hphantom{$^{\dag^5}$}~Universidad de Chile, Beauchef 851, Torre Norte, Santiago, Chile}
\EmailDD{\href{mailto:dremenik@dim.uchile.cl}{dremenik@dim.uchile.cl}}

\ArticleDates{Received September 03, 2015, in f\/inal form December 01, 2015; Published online December 03, 2015}

\Abstract{We demonstrate how a known Whittaker function integral identity arises from the $t=0$ and $q\to 1$ limit of the Macdonald polynomial eigenrelation satisf\/ied by Noumi's $q$-integral operator.}

\Keywords{Macdonald polynomials; Whittaker functions}

\Classification{05E05; 33D52; 33D52; 82B23}

The {\it class-one $\mathfrak{gl}_{N}$-Whittaker functions} are of great interest in representation theory, integrable systems and number theory (see Gerasimov, Lebedev and Oblezin~\cite{G11} and references therein).
Givental \cite{GKLO, Givental} gave the following integral representation, which we will take as our def\/inition:
\begin{gather*}
\psi_{\lambda}(x)  = \int_{\R^{N(N-1)/2}} \prod_{k=1}^{N-1} \prod_{i=1}^{k}  \dd x_{k,i}\, e^{\mathcal{F}_{\lambda}(X)},
\end{gather*}
where $\lambda = (\lambda_1,\ldots,\lambda_N)\in \C^N$, $x=(x_1,\ldots,x_N)$, $X=(x_{k,i}\colon 1\leq i\leq k\leq N)$, $x_{N,i} = x_i$, and
\begin{gather*}
\mathcal{F}_{\lambda}(X) = \I \sum_{k=1}^{N} \lambda_k \left(\sum_{i=1}^k x_{k,i} - \sum_{i=1}^{k-1} x_{k-1,i}\right) - \sum_{k=1}^{N-1} \sum_{i=1}^{k} \left(e^{x_{k,i}-x_{k+1,i}} + e^{x_{k+1,i+1}-x_{k,i}}\right).
\end{gather*}

Whittaker functions satisfy the following two integral identities.
\begin{Proposition}\label{propbumps}
Suppose $u>0$ and $\lambda,\nu\in\C^N$ with $\Re(\lambda_i+\nu_j)>0$ for all $1\leq i,j\leq N$. Then
\begin{gather*}
\int_{\R^N} \dd x\, e^{-u e^{x_1}} \psi_{-\I\lambda}(x)\psi_{-\I\nu}(x)  = u^{- \sum\limits_{j=1}^{N} (\lambda_j+\nu_j)} \prod_{1\leq i,j\leq N} \Gamma( \lambda_i +  \nu_j),
\end{gather*}
and
\begin{gather*}
\int_{\R^N} \dd x\, e^{-u e^{-x_N}} \psi_{\I\lambda}(x)\psi_{\I\nu}(x)  = u^{- \sum\limits_{j=1}^{N} (\lambda_j+\nu_j)} \prod_{1\leq i,j\leq N} \Gamma( \lambda_i +  \nu_j).
\end{gather*}
\end{Proposition}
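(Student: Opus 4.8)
\emph{Overall plan.} I would prove the first identity and then obtain the second from it by a reflection symmetry of Givental's representation. For the first identity I would first strip the parameter $u$ by scaling, reducing to $u=1$, and then induct on $N$.

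\emph{Reduction to $u=1$.} The weight $\mathcal{F}_{\lambda}$ depends on the array only through the differences appearing in the exponential terms and through $\sum_i x_{k,i}-\sum_i x_{k-1,i}$ in the linear term. A uniform shift $x_{k,i}\mapsto x_{k,i}+c$ of \emph{all} entries therefore leaves the exponential terms unchanged and increases the linear term by $\I c\sum_k\lambda_k$; since Lebesgue measure is shift invariant this gives $\psi_{-\I\lambda}(x+c\mathbf{1})=e^{c\sum_k\lambda_k}\psi_{-\I\lambda}(x)$. Substituting $x_i=\tilde x_i-\log u$ turns $e^{-ue^{x_1}}$ into $e^{-e^{\tilde x_1}}$ and extracts exactly the prefactor $u^{-\sum_j(\lambda_j+\nu_j)}$, so it suffices to prove the $u=1$ case. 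For the second identity I would use that $\hat x_{k,i}=-x_{k,k+1-i}$ (reverse and negate each row) leaves each interaction term $e^{x_{k,i}-x_{k+1,i}}+e^{x_{k+1,i+1}-x_{k,i}}$ invariant after reindexing and negates the linear term, so $\mathcal{F}_{\lambda}(\hat X)=\mathcal{F}_{-\lambda}(X)$; as $X\mapsto\hat X$ preserves measure this gives $\psi_{-\I\lambda}(-x_N,\ldots,-x_1)=\psi_{\I\lambda}(x)$. Applying $x_i\mapsto-x_{N+1-i}$ to the first identity then turns $e^{-ue^{x_1}}$ into $e^{-ue^{-x_N}}$ and each $\psi_{-\I\,\cdot}$ into the matching $\psi_{\I\,\cdot}$, yielding the second identity with the same right-hand side.

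\emph{The $u=1$ case by induction on $N$.} For $N=1$ one has $\psi_{-\I\lambda_1}(x_1)=e^{\lambda_1x_1}$, and $t=e^{x_1}$ reduces the left-hand side to $\int_0^\infty t^{\lambda_1+\nu_1}e^{-t}\,\frac{\dd t}{t}=\Gamma(\lambda_1+\nu_1)$, matching the right-hand side. For the step I would use the recursive structure of Givental's representation: isolating the level-$N$ contribution expresses $\psi^{(N)}_{-\I\lambda}(x)$ as an integral over the $(N-1)$-st row $y$ of the transfer factor $\exp(\lambda_N(\sum_i x_i-\sum_i y_i)-\sum_{i=1}^{N-1}(e^{y_i-x_i}+e^{x_{i+1}-y_i}))$ against $\psi^{(N-1)}_{-\I(\lambda_1,\ldots,\lambda_{N-1})}(y)$. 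The goal is to integrate out the top variables $x$ and reorganize the outcome as the $(N-1)$-dimensional identity multiplied by the $2N-1$ new factors $\prod_{i=1}^N\Gamma(\lambda_i+\nu_N)\prod_{j=1}^{N-1}\Gamma(\lambda_N+\nu_j)$ that distinguish the size-$N$ product from the size-$(N-1)$ one.

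\emph{Main obstacle.} The naive execution of this step fails to close: applying the recursion to both factors and integrating out $x$ first does produce a clean $\Gamma(\lambda_N+\nu_N)$ from the $x_N$-integral, but the remaining one-dimensional integrals have the form $\int_\R e^{ax-be^{-x}-ce^{x}}\,\dd x$ and evaluate to modified Bessel functions that irreducibly couple the two lower rows. The genuine content is thus an intermediate evaluation, and I see two routes. The first, essentially Stade's, is to pass to the recursive Mellin--Barnes representation of $\psi$ (itself derived from Givental's form), so that the left-hand side becomes a contour integral in auxiliary spectral variables that collapses to the Gamma product by iterated use of Barnes' first (and second) lemma. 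The second is the combinatorial route via geometric RSK: write $\prod_{i,j}\Gamma(\lambda_i+\nu_j)=\int\prod_{i,j}w_{ij}^{\lambda_i+\nu_j}e^{-w_{ij}}\prod_{i,j}\frac{\dd w_{ij}}{w_{ij}}$ over positive matrices and transport this product measure through the volume-preserving geometric RSK map, under which the parameter families $\lambda$ and $\nu$ generate the two Whittaker functions as the row- and column-insertion marginals while the relevant energy of the resulting pattern produces $e^{-e^{x_1}}$. In either approach the crux, and the step I expect to be hardest, is verifying the precise measure factorization (equivalently, the exact bookkeeping of Barnes' lemmas or of the gRSK Jacobian) that matches the $N^2$ Gamma factors.
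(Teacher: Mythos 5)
The paper gives no self-contained proof of this proposition: it attributes the first identity to Stade and obtains the second from it via exactly the reflection symmetry you use ($\psi_\lambda(x)=\psi_{-\lambda}(x')$ with $x'_i=-x_{N+1-i}$), offering as an alternative that both identities follow from Corollaries~3.6 and~3.7 of O'Connell--Sepp\"al\"ainen--Zygouras after matching conventions. Your reduction to $u=1$ by the shift property $\psi_{-\I\lambda}(x+c\mathbf{1})=e^{c\sum_k\lambda_k}\psi_{-\I\lambda}(x)$ and your derivation of the second identity from the first are both correct and coincide with the paper's route. Your attempted induction is the only genuinely new element, and your diagnosis of why it fails is accurate: integrating out the top row against the Givental transfer factors produces integrals of the form $\int_\R e^{ax-be^{-x}-ce^{x}}\,\dd x = 2(b/c)^{a/2}K_a\bigl(2\sqrt{bc}\bigr)$, and these Bessel factors couple the lower rows rather than factorizing, so the $(N-1)$-dimensional identity cannot be peeled off directly. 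Closing the argument requires either passing to a recursive Mellin--Barnes representation so that Barnes' lemmas can be applied level by level (which is Stade's proof) or the volume-preserving geometric RSK transport of the product measure $\prod_{i,j}w_{ij}^{\lambda_i+\nu_j}e^{-w_{ij}}\,\dd w_{ij}/w_{ij}$ (which is OSZ's). Since you stop at naming these two routes rather than executing either, your proposal establishes exactly as much as the paper does and defers exactly the same core content to exactly the same two sources; as a reconstruction of the paper's treatment it is faithful, but note that neither your text nor the paper contains a proof of Stade's identity itself.
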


The f\/irst of these identities is due to Stade~\cite{Stade}, while the second follows readily by appealing to the fact that $\psi_\lambda(x) = \psi_{-\lambda}(x')$ where $x' = -x_{N-i+1}$. They also follow from \cite[Corollaries~3.6 and~3.7]{OSZ} once one observes that the multiplicative Whittaker functions $\Psi^N_{\lambda}(x)$ in~\cite{OSZ} are related to those def\/ined above via $\Psi^N_{\lambda}(x) = \psi_{\I \lambda}(\log x)$, with $\log x = (\log x_1,\ldots,\log x_N)$.

The Plancherel theory for Whittaker functions (see, for example, \cite[Section~4.1.1]{BigMac}) implies that for $x,y\in\R^N$,
\begin{gather*}
\int_{\R^N}  \dd\lambda\,  \psi_{-\lambda}(x) \psi_{\lambda}(y) m_N(\lambda)= \delta(x-y),
\end{gather*}
where the Sklyanin measure $m_N(\lambda)$ is def\/ined as
\begin{gather*}
    m_N(\lambda)=\frac1{(2\pi)^N N!}\prod_{\substack{i,j=1\\i\neq j}}^N
    \frac1{\Gamma(\I \xi_i- \I \xi_j)},
  \end{gather*}
and $\delta(x-y)$ is the Dirac delta function for $x=y$. Using the above identity one readily observes the equivalence of the results of Proposition~\ref{propbumps} and the following two eigenrelations. The integral operator on the right-hand side of~(\ref{eignwhitone}) is referred to as a {\it dual Baxter operator} by Gerasimov, Lebedev and Oblezin~\cite{G06}, wherein the below result is demonstrated in a rather dif\/ferent manner than that followed in this present work.

\begin{Proposition}\label{coreignwhit}
For $u>0$ and $\Imag(w_i)<0$, $1\leq i\leq N$,
\begin{gather}\label{eignwhitone}
e^{-u e^{-x_1}} \psi_{-w}(x) = \int_{\R^N} \dd\xi\, m_N(\xi)  u^{\I \sum\limits_{k=1}^{N} ( w_i+\xi_i)}\prod_{1\leq i,j\leq N} \Gamma(-\I \xi_i - \I w_j) \psi_{\xi}(x),
\end{gather}
and
\begin{gather}\label{eignwhitN}
e^{-u e^{-x_N}} \psi_{w}(x) = \int_{\R^N}\dd\xi\, m_N(\xi)  u^{\I \sum\limits_{k=1}^{N} ( w_i+\xi_i)}\prod_{1\leq i,j\leq N} \Gamma(-\I \xi_i - \I w_j) \psi_{-\xi}(x).
\end{gather}
\end{Proposition}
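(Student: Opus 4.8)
The plan is to derive Proposition~\ref{coreignwhit} directly from Proposition~\ref{propbumps} using the Plancherel theory quoted above, rather than re-proving the eigenrelations from scratch. The text already asserts that, via the Whittaker Plancherel identity, the two statements are equivalent, so the natural strategy is to exhibit this equivalence explicitly. First I would fix attention on~(\ref{eignwhitone}) and treat its right-hand side as the Whittaker-transform reconstruction of the left-hand side. Concretely, I would take the function $f(x) = e^{-u e^{-x_1}}\psi_{-w}(x)$ and compute its Whittaker coefficients $\widehat f(\xi) := \int_{\R^N}\dd x\, \psi_{-\xi}(x)\, f(x)$, since the Plancherel inversion formula $\int_{\R^N}\dd\lambda\,\psi_{-\lambda}(x)\psi_{\lambda}(y)\,m_N(\lambda) = \delta(x-y)$ guarantees that $f(x) = \int_{\R^N}\dd\xi\, m_N(\xi)\,\widehat f(\xi)\,\psi_{\xi}(x)$. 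Comparing this reconstruction with~(\ref{eignwhitone}) reduces the claim to the single identity
\begin{gather*}
\widehat f(\xi) = \int_{\R^N}\dd x\, e^{-u e^{-x_1}}\psi_{-\xi}(x)\psi_{-w}(x) = u^{\I\sum_{k=1}^N(w_k+\xi_k)}\prod_{1\leq i,j\leq N}\Gamma(-\I\xi_i-\I w_j).
\end{gather*}

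The key observation is that this last integral is exactly an instance of Proposition~\ref{propbumps}. I would match parameters by writing $\psi_{-\xi} = \psi_{-\I(-\I\xi)}$ and $\psi_{-w} = \psi_{-\I(-\I w)}$, i.e.\ setting $\lambda = -\I\xi$ and $\nu = -\I w$ in the first identity of Proposition~\ref{propbumps}; the hypothesis $\Imag(w_i)<0$ ensures $\Real(\nu_j) = \Imag(w_j)\cdot(-1)\cdot(-1)$, and more carefully $\Real(-\I w_j) = \Imag(w_j)$, so one needs the imaginary parts to line up so that $\Real(\lambda_i+\nu_j)>0$ holds along the relevant contour of integration in $\xi$. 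Here I must be slightly careful about the sign convention relating $e^{-ue^{x_1}}$ in Proposition~\ref{propbumps} to $e^{-ue^{-x_1}}$ appearing in~(\ref{eignwhitone}); I would invoke the reflection symmetry $\psi_\lambda(x)=\psi_{-\lambda}(x')$ with $x'=(-x_{N-i+1})$ quoted after Proposition~\ref{propbumps}, which interchanges the roles of $e^{x_1}$ and $e^{-x_N}$, to pass between the two forms of the Gamma-function integral. Once parameters are matched, the evaluation of $\widehat f(\xi)$ is immediate from Stade's identity, and substituting back into the Plancherel reconstruction yields~(\ref{eignwhitone}) verbatim.

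For the second eigenrelation~(\ref{eignwhitN}) I would apply the same argument to $g(x)=e^{-ue^{-x_N}}\psi_{w}(x)$, now using the second identity of Proposition~\ref{propbumps} (the one with $e^{-ue^{-x_N}}$ and $\psi_{\I\lambda}\psi_{\I\nu}$) to compute the coefficients $\int_{\R^N}\dd x\, \psi_{-\xi}(x)\,g(x)$; the matching is symmetric to the previous case with the opposite sign convention in the spectral parameters, accounting for the appearance of $\psi_{-\xi}$ rather than $\psi_{\xi}$ on the right-hand side.

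The step I expect to be the main obstacle is the rigorous justification of the contour manipulations and the interchange of integrals implicit in applying Plancherel inversion to $f$. The Whittaker functions $\psi_\xi(x)$ grow or oscillate in ways that make the $\xi$-integral against the Sklyanin measure only conditionally convergent, and the factor $e^{-ue^{-x_1}}$ provides decay only as $x_1\to-\infty$; I would need to verify that $f$ lies in the appropriate function space for which the Plancherel theorem applies (or justify the formal computation by a suitable regularization and analytic continuation in $u$ and in the parameters $w_i$), and that the Fubini interchange needed to identify $\widehat f(\xi)$ with the Proposition~\ref{propbumps} integral is legitimate on the shifted contour dictated by $\Imag(w_i)<0$. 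The analytic-continuation bookkeeping in the spectral parameters, ensuring the Gamma-function arguments stay in their domain of definition, is where the real care is required; the algebraic core of the proof is otherwise a direct translation of Proposition~\ref{propbumps}.
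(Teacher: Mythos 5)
Your overall strategy---deriving Proposition~\ref{coreignwhit} from Proposition~\ref{propbumps} by applying the Whittaker--Plancherel inversion to $f(x)=e^{-ue^{\cdot}}\psi_{\mp w}(x)$ and evaluating the resulting coefficients by Stade's identity---is precisely what the paper does: the text offers no proof of this proposition beyond the assertion that the Plancherel identity makes it ``readily'' equivalent to Proposition~\ref{propbumps} (plus a citation to Gerasimov--Lebedev--Oblezin for an independent derivation). So at the level of method you and the paper agree, and your reduction of~(\ref{eignwhitN}) to the second identity of Proposition~\ref{propbumps} via $\lambda=-\I\xi$, $\nu=-\I w$ is the intended computation.

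However, the two places where you hedge are genuine obstructions, and neither can be repaired the way you suggest; both are in fact sign errors in the printed statement rather than gaps you could close. First, the reflection $\psi_\lambda(x)=\psi_{-\lambda}(x')$, $x_i'=-x_{N-i+1}$, sends the weight $e^{-ue^{-x_1}}$ to $e^{-ue^{x_N'}}$, \emph{not} to $e^{-ue^{x_1}}$: the pair of weights $\{e^{x_1},e^{-x_N}\}$ covered by Proposition~\ref{propbumps} is closed under this reflection, while $e^{-x_1}$ lies in the complementary orbit $\{e^{-x_1},e^{x_N}\}$, about which Proposition~\ref{propbumps} says nothing. Consequently the evaluation $\int_{\R^N}\dd x\,e^{-ue^{-x_1}}\psi_{-\xi}(x)\psi_{-w}(x)=u^{\I\sum_k(w_k+\xi_k)}\prod_{i,j}\Gamma(-\I\xi_i-\I w_j)$ that your argument requires is not an instance of Stade's identity under any available symmetry; the resolution is that~(\ref{eignwhitone}) should read $e^{-ue^{x_1}}$ on the left (indeed, reflecting~(\ref{eignwhitN}) produces exactly that identity), and with this correction your computation goes through verbatim with no reflection needed. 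Second, under your (correct) substitution the hypothesis $\Real(\lambda_i+\nu_j)>0$ of Proposition~\ref{propbumps} becomes $\Real(-\I\xi_i-\I w_j)=\Imag(w_j)>0$ for real $\xi_i$, which \emph{contradicts} the stated assumption $\Imag(w_i)<0$; ``lining up the imaginary parts'' cannot fix this. Again the printed sign is at fault: already at $N=1$, where $\psi_w(x)=e^{\I wx}$, equation~(\ref{eignwhitN}) reduces after the substitution $s=-\I(\xi+w)$ to the Mellin--Barnes formula $\frac1{2\pi\I}\int_{c-\I\infty}^{c+\I\infty}\Gamma(s)\,y^{-s}\,\dd s=e^{-y}$ with $y=ue^{-x}$ and $c=\Imag(w)$, which holds only when the contour lies to the right of the poles of $\Gamma$, i.e.\ $\Imag(w)>0$; for $\Imag(w)<0$ the right-hand side acquires extra residue terms and the identity fails. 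So your proof is the intended one, but what it proves is the corrected proposition (weight $e^{-ue^{x_1}}$ in~(\ref{eignwhitone}) and hypothesis $\Imag(w_i)>0$); as written, your two patches do not close these gaps, and the honest conclusion is that the statement as printed contains typos.
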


The aim of this note is to explain how the eigenrelation in~(\ref{eignwhitN}) arises as the $q\to 1$ limit of a certain eigenrelation involving Noumi's $q$-integral operator and the Macdonald symmetric polynomials.

\begin{Definition}
The $q$-shift operator $T_{q,z_i}$ acts by mapping function $f(z_1,\ldots,z_i,\ldots, z_N)\mapsto f(z_1,\ldots, q z_i,\ldots, z_N)$. For $u\in \C$ of suitably small modulus, {\it Noumi's $q$-integral operator} $\mathfrak{N}^u$ acts on the space of analytic functions in $z_1,\ldots, z_N$ as
\begin{gather*}
\mathfrak{N}^{\zeta}=\sum_{\nu\in(\zz_{\geq0})^N} \zeta^{|\nu|} \prod_{1\leq i<j\leq
  N}\frac{q^{\nu_i}z_i-q^{\nu_j}z_j}{z_i-z_j}
\prod_{i,j=1}^N\frac{(tz_i/z_j;q)_{\nu_i}}{(qz_i/z_j;q)_{\nu_i}}\prod_{i=1}^N (T_{q,z_i})^{\nu_i}.
\end{gather*}
Here $|\nu| = \nu_1+\cdots + \nu_N$.
\end{Definition}

This operator as well as the below eigenrelation it satisf\/ies is due to M.~Noumi. It f\/irst appeared in \cite[Proposition~3.24]{FHHSY}, and further details regarding its derivation are forthcoming in~\cite{NS}. It was subsequently discussed in \cite[Section~4]{BCGS} and a proof of the eigenrelation was given in an appendix therein by E.~Rains. The Macdonald $q$-dif\/ference operators are likewise diagonal in the basis of Macdonald symmetric polynomials with eigenvalues of the form $e_{r}\big(q^{\lambda_1}t^{N-1},q^{\lambda_2}t^{N-2}, \ldots, q^{\lambda_N}\big)$, with $e_r$ the $r^{\rm th}$ elementary symmetric polynomial. While these dif\/ference operators must commute with the Noumi $q$-integral operator there is presently no easy way to express them through each other.

Recall the Macdonald symmetric polynomial $P_{\lambda}(z)$ indexed by partitions $\lambda$ and symmetric in the $z$-variables with coef\/f\/icients which are rational functions of two auxiliary parameters $q,t\in [0,1)$ (see, for example,~\cite{BigMac, M}). We have the following eigenrelation.

\begin{Proposition}
Noumi's $q$-integral operator is diagonal in the basis of Macdonald symmetric polynomials so that for any partition~$\lambda$,
\begin{gather}\label{eqnoumi}
\big(\mathfrak{N}^{\zeta} P_{\lambda}\big)(z_1,\ldots, z_N) = \prod_{1\leq i\leq N} \frac{\big(q^{\lambda_i}t^{N+1-i}\zeta;q\big)_{\infty}}{\big(q^{\lambda_i}t^{N-i}\zeta;q\big)_{\infty}} P_{\lambda}(z_1,\ldots, z_N).
\end{gather}
This eigenrelation can be understood as a formal power series identity in~$\zeta$ or as a convergent series identity, provided~$|\zeta|$ is suitably small.
\end{Proposition}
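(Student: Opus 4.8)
The plan is to diagonalize $\mathfrak{N}^{\zeta}$ by the two-step scheme Macdonald uses for his own $q$-difference operators: first establish triangularity with respect to the dominance order (with the correct diagonal entry), then upgrade triangularity to diagonality via self-adjointness. It is convenient to work coefficient-by-coefficient in $\zeta$, writing $\mathfrak{N}_n$ for the coefficient of $\zeta^n$ in $\mathfrak{N}^{\zeta}$, a finite sum over $\{\nu\colon|\nu|=n\}$ and hence a genuine $q$-difference operator. Since each shift $T_{q,z_i}$ preserves the degree of a polynomial, $\mathfrak{N}_n$ maps the finite-dimensional space $\Lambda^d_N$ of degree-$d$ symmetric polynomials in $z_1,\ldots,z_N$ into itself (the apparent poles at $z_i=z_j$ cancel after symmetrization), so on each $\Lambda^d_N$ one has an honest linear operator, and the partitions of $d$ with at most $N$ parts index both the monomial basis $\{m_{\mu}\}$ and the Macdonald basis $\{P_{\mu}\}$.

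For triangularity I would expand the rational prefactor of each $\nu$-summand as a power series in this region $|z_1|>\cdots>|z_N|$. A direct inspection shows that both $\frac{q^{\nu_i}z_i-q^{\nu_j}z_j}{z_i-z_j}$ and $\frac{(tz_i/z_j;q)_{\nu_i}}{(qz_i/z_j;q)_{\nu_i}}$ expand into degree-zero monomials $z^{\gamma}$ with $\gamma\preceq0$ in the dominance order, so applying $\mathfrak{N}_n$ to a monomial $z^{\mu}$ yields only monomials $z^{\mu'}$ with $\mu'\preceq\mu$; as the dominant rearrangement of a partition is the maximum of its $S_N$-orbit, this gives $\mathfrak{N}^{\zeta}m_{\lambda}=c_{\lambda}(\zeta)\,m_{\lambda}+\sum_{\mu\prec\lambda}a_{\lambda\mu}(\zeta)\,m_{\mu}$. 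The diagonal entry $c_{\lambda}(\zeta)$ is the degree-zero part of the prefactor times the shift eigenvalue $q^{\sum_i\nu_i\lambda_i}$, summed against $\zeta^{|\nu|}$; collecting the leading constants $q^{\sum_i(N-i)\nu_i}$ from the first product, $(t/q)^{\sum_i(N-i)\nu_i}$ from the factors with $i<j$, $1$ from those with $i>j$, and $\prod_i\frac{(t;q)_{\nu_i}}{(q;q)_{\nu_i}}$ from those with $i=j$, one finds
\begin{gather*}
c_{\lambda}(\zeta)=\sum_{\nu\in(\zz_{\geq0})^N}\zeta^{|\nu|}\prod_{i=1}^N\frac{(t;q)_{\nu_i}}{(q;q)_{\nu_i}}\big(q^{\lambda_i}t^{N-i}\big)^{\nu_i}=\prod_{i=1}^N\frac{\big(q^{\lambda_i}t^{N+1-i}\zeta;q\big)_{\infty}}{\big(q^{\lambda_i}t^{N-i}\zeta;q\big)_{\infty}},
\end{gather*}
the last equality being the $q$-binomial theorem $\sum_{n\geq0}\frac{(t;q)_n}{(q;q)_n}y^n=\frac{(ty;q)_{\infty}}{(y;q)_{\infty}}$ at $y=q^{\lambda_i}t^{N-i}\zeta$ (convergent for $|\zeta|$ small, and a formal identity in general). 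This already reproduces the eigenvalue in~(\ref{eqnoumi}).

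Since $P_{\lambda}=m_{\lambda}+\sum_{\mu\prec\lambda}u_{\lambda\mu}m_{\mu}$, the above yields $\mathfrak{N}^{\zeta}P_{\lambda}=c_{\lambda}(\zeta)\,P_{\lambda}+\sum_{\mu\prec\lambda}b_{\lambda\mu}(\zeta)\,P_{\mu}$, and it remains only to kill the off-diagonal coefficients $b_{\lambda\mu}(\zeta)$. For this I would prove that $\mathfrak{N}^{\zeta}$ is self-adjoint for the Macdonald torus pairing $\langle f,g\rangle_N=\frac{1}{N!}\,\mathrm{CT}\big[f(z)\,\overline{g(z)}\,\Delta(z)\big]$ with $\Delta(z)=\prod_{i\neq j}\frac{(z_i/z_j;q)_{\infty}}{(tz_i/z_j;q)_{\infty}}$, for which the $P_{\mu}$ are orthogonal with nonzero norms when $q,t\in[0,1)$. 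Self-adjointness then forces diagonality: pairing the previous display with $P_{\mu}$ for $\mu\prec\lambda$ and using that $\mathfrak{N}^{\zeta}P_{\mu}$ lies in the span of $\{P_{\kappa}\colon\kappa\preceq\mu\prec\lambda\}$ gives $b_{\lambda\mu}(\zeta)\langle P_{\mu},P_{\mu}\rangle_N=0$, hence $b_{\lambda\mu}(\zeta)=0$. I expect this self-adjointness to be the main obstacle. Its proof is a $q$-difference analogue of integration by parts: transporting $\prod_i(T_{q,z_i})^{\nu_i}$ across the constant-term functional effects the substitution $z_i\mapsto q^{-\nu_i}z_i$, which replaces $\Delta(z)$ by a shifted copy, and the ratio of the shifted to the unshifted density is, up to the Vandermonde-type factor, exactly the product of Pochhammer symbols $\frac{(tz_i/z_j;q)_{\nu_i}}{(qz_i/z_j;q)_{\nu_i}}$ built into the definition of $\mathfrak{N}^{\zeta}$. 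Checking that these factors conspire to make $\mathfrak{N}^{\zeta}$ exactly self-adjoint is the crux; once it is established, (\ref{eqnoumi}) holds for all $q,t\in[0,1)$.
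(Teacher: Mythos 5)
First, a point of reference: the paper does not prove this proposition at all --- it attributes the eigenrelation to Noumi, citing \cite[Proposition~3.24]{FHHSY}, the forthcoming work~\cite{NS}, and the proof by E.~Rains in the appendix of~\cite{BCGS} --- so your argument has to stand entirely on its own. Its triangularity half is essentially sound: working coefficientwise in $\zeta$, expanding each prefactor in the region $|z_1|>\cdots>|z_N|$ so that every monomial produced is $z^\gamma$ with $\gamma\preceq0$, and collecting constant terms ($q^{(N-i)\nu_i}$ from the Vandermonde-type factor, $(t/q)^{(N-i)\nu_i}$ from the $i<j$ Pochhammer ratios, $1$ from $i>j$, $(t;q)_{\nu_i}/(q;q)_{\nu_i}$ from $i=j$, and $q^{\lambda_i\nu_i}$ from the shifts) does, via the $q$-binomial theorem, reproduce exactly the eigenvalue in~(\ref{eqnoumi}). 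But even this half rests on a claim you have mis-justified: that the coefficient $\mathfrak{N}_n$ of $\zeta^n$ maps symmetric polynomials to symmetric polynomials. The poles of the $\nu$-th summand are not only at $z_i=z_j$: the denominator $(qz_i/z_j;q)_{\nu_i}$ has poles at $z_i=q^{-k}z_j$ for $k=1,\dots,\nu_i$, and these are \emph{not} cancelled by symmetrizing a single summand; they cancel only across distinct $\nu$'s, and only when the operator acts on a symmetric function. For instance, with $N=2$, $n=2$, the coefficients of the $\nu=(2,0)$ and $\nu=(1,1)$ terms each have a simple pole at $z_2=qz_1$ with opposite residues, and the two operator outputs agree on that locus only because $f(q^2z_1,qz_1)=f(qz_1,q^2z_1)$ for symmetric~$f$. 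This stability lemma (part of what Noumi--Sano actually prove) is genuinely needed: without it, your expansion argument only shows that $\mathfrak{N}_n m_\lambda$ is a symmetric Laurent series with exponents $\preceq\lambda$ in an asymptotic region, which does not license writing it in the bases $\{m_\mu\}$ or $\{P_\mu\}$.

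The larger gap is the one you yourself call ``the crux'': self-adjointness of $\mathfrak{N}^{\zeta}$ for the torus pairing is asserted, not proven, and it is not a routine verification. Transporting $\prod_i(T_{q,z_i})^{\nu_i}$ across the constant-term functional amounts to dilating each contour $|z_i|=1$ to radius $q^{\nu_i}$, and both the coefficients of $\mathfrak{N}_n$ and the density $\Delta(z)$ have poles in the intervening annuli (at $z_i/z_j=q^{-k}$ and at $tz_i/z_j=q^{-m}$), so the naive change of variables produces residue corrections that must themselves be shown to cancel --- the same kind of cross-$\nu$ cancellation as above, now in integrated form. Moreover, this step carries the entire weight of the theorem: once triangularity is in place, self-adjointness is \emph{equivalent} to the diagonality you want (self-adjoint plus triangular implies diagonal, and conversely any operator diagonal in an orthogonal basis with real eigenvalues is self-adjoint). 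So the proposal reduces Noumi's eigenrelation to a lemma that is just as strong, with no mechanism offered for proving it beyond ``checking,'' whereas the proofs the paper points to (FHHSY, Noumi--Sano, Rains) obtain the eigenrelation by quite different means. As it stands this is a plausible program rather than a proof; to complete it you would need to supply the polynomial-stability lemma and an honest, non-circular proof of self-adjointness.
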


The remainder of this note will be devoted to showing how when $t=0$ and $q\to 1$ under particular scaling this eigenrelation yields that of~(\ref{eignwhitN}). We will proceed formally. Various uniform estimates would be necessary for this particular route to yield a rigorous derivation of~(\ref{eignwhitN}). Since this identity has already been proved through other means, we do not provide these necessary details.

Our derivation has two steps. The f\/irst takes the termwise limit of the Noumi operator, yielding the eigenrelation~(\ref{eq:Dt}). It is this step which we do not fully justify. The second step uses residue calculus to rewrite the resulting summation as the claimed contour integral formula.

From here on set $t=0$ and take the following scalings
\begin{gather*}
  q=e^{-\ep},\qquad\lambda_k=(N-2k+1)\ep^{-1}\log\big(\ep^{-1}\big)+\ep^{-1}x_k,\qquad z_k=e^{\I\ep w_k},\qquad \zeta=-u\ep^N.
\end{gather*}
Furthermore, def\/ine the following scaled version of $t=0$ Macdonald symmetric polynomial
\begin{gather*}
\psi^\ep_w(x)=\ep^{\frac{N(N-1)}{2}+\frac{N(N-1)}{2}A(\ep)}P_{\lambda}(z),\qquad A(\ep)=-\tfrac16\pi^2\ep^{-1}-\ep^{-1}\log(\ep/2\pi).
\end{gather*}
Note, at $t=0$, the Macdonald symmetric polynomial has been identif\/ied with the class-one $q$-Whittaker functions, as def\/ined in the work of
 Gerasimov, Lebedev and Oblezin~\cite{G08} (the term ``$q$-Whittaker function'' is used to denote dif\/ferent objects by dif\/ferent authors).

Since we will not be proving a theorem, let us summarize here the result we will formally demonstrate. Under the above scaling, assume that a~sequence of symmetric function~$f^{\e}(z)$ have a limit as~$\e\to 0$ to some~$\tilde{f}(w)$. Then, formally we will show that
\begin{gather*}
\big(\mathfrak{N}^{\zeta} f^{\e}\big)(z) \to \big(\tilde{\mathfrak{N}}^{-u} \tilde{f}\big)(w),
\end{gather*}
where the limiting operator $\tilde{\mathfrak{N}}^{u}$ is def\/ined in~(\ref{eq.fasww}) and is identif\/ied with the dual Baxter operator through Lemma~\ref{lem:contourIntId}. This is the key result of this paper.

{\bf Step 1.}  Let us consider how~(\ref{eqnoumi}) scales as $\ep\to 0$. The right-hand side at $t=0$ equals
\begin{gather*}
\frac1{(q^{\lambda_N}\zeta;q)_{\infty}} P_{\lambda}(z_1,\ldots, z_N).
\end{gather*}
Observe that from the convergence of the $e_{q}$-exponential to the usual exponential (see \cite[Section~3.1.1]{BigMac}),
\begin{gather*}
\frac{1}{(\zeta q^{\lambda_N};q)_\infty}\longrightarrow e^{-ue^{-x_N}}.
\end{gather*}
In \cite[Theorem~3.1]{G11}, it was explained how $t=0$ Macdonald polynomials (i.e., $q$-Whittaker functions) converge to Whittaker functions. Certain tail estimates necessary for that argument were further provided in \cite[Theorem~4.1.7]{BigMac}. The resulting convergence holds that
\begin{gather*}
\psi^\ep_w(x)\longrightarrow\psi_w(x).
\end{gather*}
Note that in both~\cite{G11} and~\cite{BigMac} there was a mistake -- the prefactor for $A(\ep)$ in the def\/inition of~$\psi^\ep_w(x)$ should be $\frac{N(N-1)}{2}$ (as above) whereas in \cite[Theorem~3.1]{G11} and \cite[Theorem~4.1.7]{BigMac} it was written as $\frac{(N-1)(N+2)}{2}$. Let us brief\/ly explain where this error came from (in reference to the proof of  \cite[Theorem~4.1.7]{BigMac}). Comparing equation~(3.8) with~(4.25) we f\/ind that in~(4.25) the term $\Delta^{\e}(\underbar{x}_{\ell})$ should actually be  $\Delta^{\e}(\underbar{x}_{\ell+1})$ (here $\ell+1=N$). This error resulted in neglecting $\ell+1$ extra factors of~$e^{A(\ep)}$ which jives with the dif\/ference between the incorrect and correct powers $\frac{(N-1)(N+2)}{2} - \frac{N(N-1)}{2} = N$.

Turning to the left-hand side, let us consider how each term in the summation scales:
\begin{gather*}
\frac{q^{\nu_i}z_i-q^{\nu_j}z_j}{z_i-z_j} =\frac{e^{-\ep\nu_i+\I\ep w_i}-e^{-\ep\nu_j+\I\ep
    w_j}}{e^{\I\ep w_i}-e^{\I\ep w_j}}
\longrightarrow\frac{\I(\nu_j-\nu_i)+(w_j-w_i)}{w_j-w_i},\\
  \frac{1}{(qz_i/z_j;q)_{\nu_i}} =\frac1{1-e^{-\ep+\I\ep(w_i-w_j)}}\frac1{1-e^{-2\ep+\I\ep(w_i-w_j)}}\dotsm\frac1{1-e^{-\nu_i\ep+\I\ep(w_i-w_j)}}\\
\hphantom{\frac{1}{(qz_i/z_j;q)_{\nu_i}}}{}
=\frac1{\ep(1-\I(w_i-w_j))}\dotsm\frac1{\ep(\nu_i-\I(w_i-w_j))}+\rm{l.o.t.}\\
\hphantom{\frac{1}{(qz_i/z_j;q)_{\nu_i}}}{}
=\ep^{-\nu_i}\frac{\Gamma(1+\I(w_j-w_i))}{\Gamma(1+\nu_i+\I(w_j-w_i))}+\rm{l.o.t.},
\end{gather*}
where $\rm{l.o.t.}$ denotes lower order terms in~$\ep$. Also note that
\begin{gather*}
\prod_{i=1}^NT^{\nu_i}_{q,z_i}P_{\lambda}(z)=P_{\lambda}\big(e^{\I\ep w_1-\ep\nu_1},\dots,e^{\I\ep
    w_N-\ep\nu_N}\big)
=\prod_{i=1}^NS^{\nu_i}_{\I,w_i}\psi^\ep_w(x)\ep^{-\frac{N(N-1)}{2}-\frac{(N+2)(N-1)}{2}A(\ep)},
\end{gather*}
where the ef\/fect of $S_{a,w_i}$ is to shift~$w_i$ by~$a$. Putting all of this together we
deduce that
\begin{gather}\label{eq.fasww}
\tilde{\mathfrak{N}}^{u} =   \sum_{\nu\in(\zz_{\geq0})^N} (u)^{|\nu|} \prod_{i<j}\frac{w_j-w_i+\I(\nu_j-\nu_i)}{w_j-w_i}
\prod_{i,j}\frac{\Gamma(1+\I(w_j-w_i))}{\Gamma(1+\nu_i+\I(w_j-w_i))}\prod_{i=1}^N(S_{\I,w_i})^{\nu_i}
\end{gather}
satisf\/ies the eigenrelation
\begin{gather}\label{eq:Dt}
\tilde{\mathfrak{N}}^{-u} \psi_{w}(x) = e^{-ue^{-x_N}}\psi_w(x).
\end{gather}
This completes the f\/irst step of our derivation. As we already mentioned, various estimates are needed in order to turn this into a rigorous proof.

{\bf Step 2.} The purpose of the second step is to show how~$\tilde{\mathfrak{N}}^{-u}$ can be rewritten in terms of contour integrals through residue calculus. We will obtain this as a consequence of the following slightly more general result:

\begin{Lemma}\label{lem:contourIntId}
  Let $a$ be a positive real number and consider a symmetric function~$f$ defined on the set $\{w\in\cc^N \colon \Imag(w_i)\leq-a,\,i=1,\dots,N\}$ which is bounded and analytic in each variable in this set.
  Then
    \begin{gather}\label{eq.fasf}
	\big(\tilde{\mathfrak{N}}^{-u}f\big)(w)
	 =\int_{(a+\I\rr)^N}\dd \xi\, s_N(\xi)\,u^{\sum_i(\I w_i-\xi_i)}\prod_{i,j=1}^N\Gamma(\xi_j-\I
  	 w_i) f(-\I\xi)
	\end{gather}
	for $u>0$, where~$s_N$ is the following variant of the Sklyanin measure:
	\begin{gather*}
	s_N(\xi_1,\dots,\xi_N)=\frac1{(2\pi\I)^NN!}\prod_{\substack{i,j=1\\i\neq j}}^N
	\frac1{\Gamma(\xi_i-\xi_j)}.
	\end{gather*}
\end{Lemma}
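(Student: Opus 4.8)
The plan is to establish \eqref{eq.fasf} by converting the multi-index sum defining $\tilde{\mathfrak{N}}^{-u}$ in \eqref{eq.fasww} into a multiple contour integral, interpreting the terms of the sum as a sum of residues. First I would apply $\tilde{\mathfrak{N}}^{-u}$ to $f$ explicitly: since $(S_{\I,w_i})^{\nu_i}$ shifts $w_i$ by $\I\nu_i$, acting on $f$ produces $f(w_1+\I\nu_1,\dots,w_N+\I\nu_N)$, which I would rewrite in the form $f(-\I\xi)$ by setting $\xi_i = \I w_i + \nu_i$, i.e.\ evaluating $f$ at the lattice points shifted off the line $a+\I\rr$. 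The boundedness and analyticity hypotheses on $f$ in the lower half-plane $\{\Imag(w_i)\le -a\}$ are exactly what is needed to guarantee that the contour integral on the right-hand side converges and that one may close contours without boundary contributions.

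The key computation is to verify that the integrand on the right-hand side of \eqref{eq.fasf}, namely $s_N(\xi)\,u^{\sum_i(\I w_i-\xi_i)}\prod_{i,j}\Gamma(\xi_j-\I w_i)\,f(-\I\xi)$, has poles located precisely at $\xi_j = \I w_j + \nu_j$ for $\nu_j\in\zz_{\ge0}$, and that the residue at such a point reproduces the corresponding summand of \eqref{eq.fasww}. The relevant poles come from the Gamma functions $\Gamma(\xi_j-\I w_i)$: for fixed $j$, the factor $\Gamma(\xi_j - \I w_j)$ has poles at $\xi_j - \I w_j = -\nu_j$, i.e.\ at $\xi_j=\I w_j-\nu_j$; I would account for the sign/orientation of the contour so that the poles picked up sit at $\xi_j = \I w_j + \nu_j$ as dictated by the shift operator. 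The residue of $\Gamma$ at a nonpositive integer $-\nu_j$ is $(-1)^{\nu_j}/\nu_j!$, and I expect the products of these residues, combined with the factors $\Gamma(\xi_j-\I w_i)$ for $i\ne j$ evaluated at the pole and with the Sklyanin-type prefactor $s_N$, to reassemble exactly into the ratio $\prod_{i,j}\Gamma(1+\I(w_j-w_i))/\Gamma(1+\nu_i+\I(w_j-w_i))$ together with the Vandermonde-type factor $\prod_{i<j}\frac{w_j-w_i+\I(\nu_j-\nu_i)}{w_j-w_i}$ appearing in \eqref{eq.fasww}. The power $u^{\sum_i(\I w_i-\xi_i)}$ evaluates to $u^{-|\nu|}=u^{\sum_i(-\nu_i)}$ at the pole; after flipping $u\to-u$ implicitly in $\tilde{\mathfrak{N}}^{-u}$ this matches the $(u)^{|\nu|}$ weight, with the signs $(-1)^{\nu_j}$ from the Gamma residues absorbed.

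The main obstacle will be the bookkeeping of the off-diagonal Gamma factors and the reflection-formula manipulations needed to show the residue sum collapses to the claimed closed form. Concretely, at the pole $\xi_j=\I w_j+\nu_j$ the surviving factors $\Gamma(\xi_j-\I w_i)=\Gamma(\I(w_j-w_i)+\nu_j)$ for $i\ne j$ must be combined with the $1/\Gamma(\xi_i-\xi_j)$ terms from $s_N$, and one needs the identity $\Gamma(1+z)=z\,\Gamma(z)$ repeatedly (and possibly the reflection formula) to recognize the Vandermonde factor $\prod_{i<j}(w_j-w_i+\I(\nu_j-\nu_i))/(w_j-w_i)$. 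I would organize this by treating the $N!$ from the denominator of $s_N$ as exactly cancelling the sum over orderings of which variable collects which $\nu_j$, so that the symmetric sum over residues reduces to a single representative configuration. The justification that no poles are crossed when deforming and that the tails vanish rests entirely on the stated boundedness of $f$ and the exponential decay of the Gamma products, which is where the hypothesis $\Imag(w_i)\le -a$ enters; I would appeal to standard estimates rather than reprove them. Once the residue-by-residue matching is confirmed, summing over all $\nu\in(\zz_{\ge0})^N$ yields \eqref{eq.fasf}, completing the proof.
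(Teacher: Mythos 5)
Your overall strategy is the same as the paper's: expand the right-hand side of \eqref{eq.fasf} as a sum of residues and match it term by term with the sum defining $\tilde{\mathfrak{N}}^{-u}$ in \eqref{eq.fasww}, using $z\Gamma(z)=\Gamma(1+z)$ and the reflection formula for the Gamma bookkeeping. However, as written your proposal contains a sign error that derails the matching, and the fix you propose for it cannot work. The summand of \eqref{eq.fasww} evaluates $f$ at $w+\I\nu$, and since $-\I(\I w_i-\nu_i)=w_i+\I\nu_i$, the point $\xi$ at which $f(-\I\xi)=f(w+\I\nu)$ is $\xi_i=\I w_i-\nu_i$, not $\xi_i=\I w_i+\nu_i$ as you set. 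These points are exactly the poles of $\Gamma(\xi_i-\I w_i)$, so there is no discrepancy between where the poles sit and where the shift operator ``wants'' them: the tension you flag is an artifact of your arithmetic, and no amount of ``accounting for the sign/orientation of the contour'' can relocate poles --- orientation only affects the overall sign of a residue contribution, never its location. The error propagates into your treatment of the $u$-powers: at the true poles, in a permutation configuration $\xi_i=\I w_{\sigma(i)}-\nu_i$, one has $u^{\sum_i(\I w_i-\xi_i)}=u^{+|\nu|}$, which combines with the residues $(-1)^{\nu_i}/\nu_i!$ of the Gamma factors to produce exactly the weight $(-u)^{|\nu|}/\prod_i\nu_i!$ appearing in $\tilde{\mathfrak{N}}^{-u}$; your claimed $u^{-|\nu|}$ is a reciprocal power and cannot be reconciled with $u^{|\nu|}$ by the sign flip $u\to-u$ you invoke.

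There is a second, independent gap. The residue expansion does not only produce ``orderings'': a priori each variable $\xi_j$ can pick up a pole from any of the $N$ factors $\Gamma(\xi_j-\I w_i)$, $i=1,\dots,N$, so the expansion is indexed by arbitrary maps $(m_1,\dots,m_N)\in\{1,\dots,N\}^N$ together with $\nu\in(\zz_{\geq0})^N$, not only by bijections. Your plan of letting the $1/N!$ in $s_N$ cancel ``the sum over orderings of which variable collects which $\nu_j$'' silently discards the non-injective configurations. The paper handles these with a short but essential observation: if $m_i=m_j$ for some $i\neq j$, then the factors $1/\Gamma(\xi_i-\xi_j)$ and $1/\Gamma(\xi_j-\xi_i)$ coming from $s_N$ evaluate to $\Gamma(\nu_j-\nu_i)^{-1}\Gamma(\nu_i-\nu_j)^{-1}=0$, since one of the two arguments is a nonpositive integer. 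Only after this does the sum restrict to permutations $\sigma\in S_N$, and then the symmetry of $f$ makes all $\sigma$ contribute equally, cancelling the $1/N!$. With these two repairs --- the correct pole locations $\xi_i=\I w_{\sigma(i)}-\nu_i$ and the vanishing of the repeated-index residues --- your outline does become the paper's proof; the final Gamma-function identity you describe only qualitatively is the content of \eqref{eq:gammaIdentity2}, which the paper verifies via Euler's reflection formula and a parity count.
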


To see that this implies the desired result, let $f(w)=\psi_w(x)$ and observe that~$f$ satisf\/ies the necessary boundedness thanks to \cite[Lemma~4.1.19]{BigMac}.
Making the change of variables $\xi \mapsto -\I\xi$ and combining~(\ref{eq.fasf}) with~(\ref{eq:Dt}) we easily deduce~(\ref{eignwhitN}).

\begin{proof}[Proof of Lemma~\ref{lem:contourIntId}]
We will prove the result by expanding the integral on the right-hand side of~(\ref{eq.fasf}) into residues and matching the result with the summation on the right-hand side of~(\ref{eq.fasww}).

For $i,j=1,\dots,N$, the integrand on the right-hand side of~(\ref{eq.fasf}) has singularities coming from the factor $\Gamma(\xi_j-\I w_i)$ at each point of the form $\xi_j=\I w_i-\nu$ for $\nu\in\zz_{\geq0}$ (where $\zz_{\geq0}$ stands for the set of non-negative integers). By shifting each contour to the left to $-\infty$ we will pick up each of these singularities, and the result is that the integral will equal the sum of the associated residues (The fact that the contours of integration can be deformed in this way follows from our assumption on $f$ and the known asymptotics of the Gamma function, which give $c_1  e^{-\frac{\pi}2|\Imag(z)|}|\Imag(z)|^\eta\leq|\Gamma(z)|\leq c_2 e^{-\frac{\pi}2|\Imag(z)|}|\Imag(z)|^\eta$ for $\Real(z)$ in some f\/inite interval and some $c_1,c_2>0$ and $\eta\in\rr$, see, e.g., \cite[(6.1.45)]{abrSteg};
observe also that, thanks to the factor $u^{\sum_i(\I w_i-\xi_i)}$, there is no pole at $-\infty$).
Each of these residues is evaluated at a point of the form $(\xi_1,\dots,\xi_N)=(\I w_{m_1}-\nu_{1},\dots,\I w_{m_N}-\nu_{N})$ for some $(m_1,\dots,m_N)\in\{1,\dots,N\}^N$ and $\nu_{i}\geq0$ for $i=1,\dots,N$, so the right-hand side of~(\ref{eq.fasf}) equals
  \begin{gather*}
    \frac1{N!}\sum_{(m_1,\dots,m_N)\in\{1,\dots,N\}^N}\sum_{\nu\in(\zz_{\geq0})^N}
    u^{\sum_i(\I w_i-\I w_{m_i}+\nu_i)}\frac{(-1)^{\sum_i\nu_i}}{\prod_i\nu_i!}f(w_{m_1}+\I\nu_1,\dots,w_{m_N}+\I\nu_N)\\
  \qquad{}\times  \prod_{i\neq j}\Gamma(\I w_{m_j}-\nu_j-\I
      w_i)\prod_{i\neq j}\frac1{\Gamma(\I w_{m_i}-\nu_i-\I w_{m_j}+\nu_j)}.
  \end{gather*}

If $m_i=m_j$ for some $i\neq j$ then the associated term in the above sum equals 0. In fact, such a term has a factor of the form $\Gamma(\nu_j-\nu_i)^{-1}\Gamma(\nu_i-\nu_j)^{-1}$, which equals 0 because $\nu_j-\nu_i\in\zz$. Hence the above sum is restricted to the case where $m_i=\sigma(i)$ for some $\sigma\in S_N$ and equals
  \begin{gather*}
    \frac1{N!}\sum_{\sigma\in S_N}\sum_{\nu\in(\zz_{\geq0})^N}
    u^{\sum_i(\I w_i-\I w_{\sigma(i)}+\nu_i)}\frac{(-1)^{\sum_i\nu_i}}{\prod_i\nu_i!}
    f(w_{\sigma(1)}+\I\nu_1,\dots,w_{\sigma(N)}+\I\nu_N)\\
   \qquad{}\times \frac{\prod\limits_{i=1}^N\prod\limits_{j\neq i}
      \Gamma(\I w_{\sigma(j)}-\nu_j-\I w_i))}{\prod\limits_{i<j}\Gamma(\I w_{\sigma(i)}-\nu_i-\I w_{\sigma(j)}+\nu_j))
      \Gamma(\I w_{\sigma(j)}-\nu_j-\I w_{\sigma(i)}+\nu_i))}.
  \end{gather*}
  By the symmetry of~$f$, the sum in~$\nu$ does not depend on
  $\sigma\in S_N$, and therefore if we choose~$\sigma$ to be the identity we deduce that
  the sum equals
  \[\sum_{\nu\in(\zz_{\geq0})^N}(-u)^{\sum_i\nu_i}\frac{1}{\prod_i\nu_i!}f(w+\I\nu)
    \frac{\prod\limits_{i\neq j}\Gamma(\I(w_{j}-w_{i})-\nu_j))}
    {\prod\limits_{i<j}\Gamma(\I(w_{i}-w_j)-\nu_{i}+\nu_j)\Gamma(\I(w_j-w_i)-\nu_j+\nu_i)}.\]
  To complete the argument matching the right-hand side of~(\ref{eq.fasf}) to the right-hand side of~(\ref{eq.fasww}) it suf\/f\/ices to show that
  \begin{gather*}
    \frac{1}{\prod_i\nu_i!}\frac{\prod\limits_{i\neq j}\Gamma(\I(w_{j}-w_{i})-\nu_j))}
    {\prod\limits_{i<j}\Gamma(\I(w_{i}-w_j)-\nu_{i}+\nu_j)\Gamma(\I(w_j-w_i)-\nu_j+\nu_i)}\\
   \qquad{} =\prod_{i<j}\frac{w_j-w_i+\I(\nu_j-\nu_i)}{w_j-w_i}
    \prod\limits_{i,j}\frac{\Gamma(1+\I(w_j-w_i))}{\Gamma(1+\nu_i+\I(w_j-w_i))}.
  \end{gather*}
Observe f\/irst of all that in the last product on the right-hand side of the above equation, the terms $i=j$ equal
  $\Gamma(1+\nu_i)^{-1}=1/(\nu_i!)$. Therefore if we write $r_i=\I w_i$, we need to show that
\begin{gather}
    \label{eq:gammaIdentity2}
    \prod_{i\neq j}\frac{\Gamma(r_{j}-r_{i}-\nu_j))}
    {\Gamma(r_{i}-r_j-\nu_{i}+\nu_j)}
    =\prod_{i<j}\frac{r_j-r_i-\nu_j+\nu_i}{r_j-r_i}
    \frac{\Gamma(1+r_i-r_j)\Gamma(1+r_j-r_i)}{\Gamma(1+\nu_j+r_i-r_j)\Gamma(1+\nu_i+r_j-r_i)}.
  \end{gather}
  Recalling Euler's ref\/lection formula
  \begin{gather}
    \label{eq:euler}
    \Gamma(1-z)\Gamma(z)=\frac\pi{\sin(\pi z)}
  \end{gather}
  and the fact that $z\Gamma(z)=\Gamma(1+z)$ we have
\begin{gather*}
\frac{\Gamma(1+r_i-r_j)\Gamma(1+r_j-r_i)}{r_j-r_i}=\frac\pi{\sin(\pi(r_j-r_i))}
\end{gather*}
  and
  \begin{gather*}
    \frac{r_j-r_i-\nu_j+\nu_i}{\Gamma(1+\nu_j+r_i-r_j)\Gamma(1+\nu_i+r_j-r_i)}
    \\
    \qquad{}=\frac{\sin(\pi(r_j-r_i-\nu_j+\nu_i))}\pi\frac{\Gamma(1-r_j+r_i+\nu_j-\nu_i)}{\Gamma(1+\nu_j+r_i-r_j)}
    \frac{\Gamma(1+r_j-r_i-\nu_j+\nu_i)}{\Gamma(1+\nu_i+r_j-r_i)}.
  \end{gather*}
  Using these identities and the fact that $\sin(\pi(a+k))=(-1)^k\sin(\pi a)$, the right-hand side of~\eqref{eq:gammaIdentity2} becomes
  \begin{gather*}
  \prod_{i<j}(-1)^{\nu_i-\nu_j}\prod_{i\neq
    j}\frac{\Gamma(1-r_j+r_i+\nu_j-\nu_i)}{\Gamma(1+\nu_j+r_i-r_j)},
  \end{gather*}
  and hence~\eqref{eq:gammaIdentity2} is equivalent to
  \begin{gather*}
  \prod_{i<j}(-1)^{\nu_i-\nu_j}\prod_{i\neq
    j}\frac{\Gamma(1-r_i+r_j+\nu_i-\nu_j)}{\Gamma(1+\nu_j+r_i-r_j)}\frac{\Gamma(r_{i}-r_j-\nu_{i}+\nu_j)}
  {\Gamma(r_{j}-r_{i}-\nu_j)}=1.
  \end{gather*}
  But using \eqref{eq:euler} again the left-hand side equals
  \begin{gather*}
  \prod_{i<j}(-1)^{\nu_i-\nu_j}\prod_{i\neq
    j}\frac{\pi}{\sin(\pi(r_i-r_j-\nu_i+\nu_j))}\frac{\sin(\pi(r_j-r_i-\nu_j))}{\pi}
  =(-1)^\kappa
  \end{gather*}
  with
  \begin{gather*}
  \kappa=\sum_{i<j}(\nu_i-\nu_j)+\sum_{i\neq j}(\nu_i-2\nu_j)\\
  \hphantom{\kappa}{} =
  \sum_{m=1}^n(n-2m+1)\nu_m-(n-1)\sum_{m=1}^n\nu_m=2\sum_{m=1}^n(1-m)\nu_m,
\end{gather*}
  which f\/inishes our derivation since~$\kappa$ is even.
\end{proof}

\subsection*{Acknowledgements}
We appreciate helpful comments from our referees. AB was partially supported by the NSF grant DMS-1056390. IC was partially supported by the NSF through DMS-1208998 as well as by the Clay Mathematics Institute through the Clay Research Fellowship, by the Institute Henri Poincar\'e through the Poincar\'e Chair, and by the Packard Foundation through a Packard Foundation Fellowship. DR was partially supported by Fondecyt Grant 1120309, by Conicyt Basal-CMM, and by Programa Iniciativa Cient\'if\/ica Milenio grant number NC130062 through Nucleus Millenium Stochastic Models of Complex and Disordered Systems.

\pdfbookmark[1]{References}{ref}
\LastPageEnding


\begin{thebibliography}{99}
\footnotesize\itemsep=0pt

\bibitem{abrSteg}
Abramowitz M., Stegun I.A., Handbook of mathematical functions with formulas,
  graphs, and mathematical tables, \textit{National Bureau of Standards Applied
  Mathematics Series}, Vol.~55, U.S.~Government Printing Of\/f\/ice, Washington,
  D.C., 1964.

\bibitem{BigMac}
Borodin A., Corwin I., Macdonald processes, \href{http://dx.doi.org/10.1007/s00440-013-0482-3}{\textit{Probab. Theory Related
  Fields}} \textbf{158} (2014), 225--400, \href{http://arxiv.org/abs/1111.4408}{arXiv:1111.4408}.

\bibitem{BCGS}
Borodin A., Corwin I., Gorin V., Shakirov S., Observables of {M}acdonald
  processes, \href{http://dx.doi.org/10.1090/tran/6359}{\textit{Trans. Amer. Math. Soc.}}, {t}o appear, \href{http://arxiv.org/abs/1306.0659}{arXiv:1306.0659}.

\bibitem{FHHSY}
Feigin B., Hashizume K., Hoshino A., Shiraishi J., Yanagida S., A commutative
  algebra on degenerate {${\mathbb{CP}}^1$} and {M}acdonald polynomials,
  \href{http://dx.doi.org/10.1063/1.3192773}{\textit{J.~Math. Phys.}} \textbf{50} (2009), 095215, 42~pages,
  \href{http://arxiv.org/abs/0904.2291}{arXiv:0904.2291}.

\bibitem{GKLO}
Gerasimov A., Kharchev S., Lebedev D., Oblezin S., On a {G}auss--{G}ivental
  representation of quantum {T}oda chain wave function, \href{http://dx.doi.org/10.1155/IMRN/2006/96489}{\textit{Int. Math. Res.
  Not.}} \textbf{2006} (2006), 96489, 23~pages, \href{http://arxiv.org/abs/math.RT/0505310}{math.RT/0505310}.

\bibitem{G06}
Gerasimov A., Lebedev D., Oblezin S., Baxter operator and {A}rchimedean {H}ecke
  algebra, \href{http://dx.doi.org/10.1007/s00220-008-0547-9}{\textit{Comm. Math. Phys.}} \textbf{284} (2008), 867--896,
  \href{http://arxiv.org/abs/0706.3476}{arXiv:0706.3476}.

\bibitem{G08}
Gerasimov A., Lebedev D., Oblezin S., On {$q$}-deformed
  {${\mathfrak{gl}}_{\ell+1}$}-{W}hittaker function~{III}, \href{http://dx.doi.org/10.1007/s11005-011-0468-y}{\textit{Lett. Math.
  Phys.}} \textbf{97} (2011), 1--24, \href{http://arxiv.org/abs/0805.3754}{arXiv:0805.3754}.

\bibitem{G11}
Gerasimov A., Lebedev D., Oblezin S., On a classical limit of {$q$}-deformed
  {W}hittaker functions, \href{http://dx.doi.org/10.1007/s11005-012-0545-x}{\textit{Lett. Math. Phys.}} \textbf{100} (2012),
  279--290, \href{http://arxiv.org/abs/1101.4567}{arXiv:1101.4567}.

\bibitem{Givental}
Givental A., Stationary phase integrals, quantum {T}oda lattices, f\/lag
  manifolds and the mirror conjecture, in Topics in Singularity Theory,
  \textit{Amer. Math. Soc. Transl. Ser.~2}, Vol.~180, Amer. Math. Soc.,
  Providence, RI, 1997, 103--115, \href{http://arxiv.org/abs/alg-geom/9612001}{alg-geom/9612001}.

\bibitem{M}
Macdonald I.G., Symmetric functions and {H}all polynomials, The Clarendon
  Press, Oxford University Press, New York, 1979, oxford Mathematical
  Monographs.

\bibitem{NS}
Noumi M., Sano A., An inf\/inite family of higher-order dif\/ference operators that
  commute with {R}uijsenaars operators of type~{A}, {i}n preparation.

\bibitem{OSZ}
O'Connell N., Sepp{\"a}l{\"a}inen T., Zygouras N., Geometric {RSK}
  correspondence, {W}hittaker functions and symmetrized random polymers,
  \href{http://dx.doi.org/10.1007/s00222-013-0485-9}{\textit{Invent. Math.}} \textbf{197} (2014), 361--416, \href{http://arxiv.org/abs/1210.5126}{arXiv:1210.5126}.

\bibitem{Stade}
Stade E., Archimedean {$L$}-factors on {${\rm GL}(n)\times{\rm GL}(n)$} and
  generalized {B}arnes integrals, \href{http://dx.doi.org/10.1007/BF02784531}{\textit{Israel~J. Math.}} \textbf{127} (2002),
  201--219, \href{http://arxiv.org/abs/1102.2457}{arXiv:1102.2457}.

\end{thebibliography}
\end{document}